\newtheorem{theorem}{Theorem}
\newtheorem{lemma}{Lemma}
\theoremstyle{definition}
\newcommand{\ket}[1]{|#1\rangle}
\begin{document}

\title{Randomly distilling W-class states into general configurations of two-party entanglement}

\author{W. Cui}
\email{cuiwei@physics.utoronto.ca}
\author{E. Chitambar}
\email{e.chitambar@utoronto.ca}
\author{H.K. Lo}
\email{hklo@comm.utoronto.ca}
\thanks{\\ $^*{}^\dagger$ Authors contributed equally to this project.}

\affiliation{Center for Quantum Information and Quantum Control (CQIQC),
Department of Physics and Department of Electrical \& Computer Engineering,
University of Toronto, Toronto, Ontario, M5S 3G4, Canada}

\date{\today}

\begin{abstract}
In this article we obtain new results for the task of converting a \textit{single} $N$-qubit W-class state (of the form $\sqrt{x_0}\ket{00...0}+\sqrt{x_1}\ket{10...0}+...+\sqrt{x_N}\ket{00...1}$) into maximum entanglement shared between two random parties.  Previous studies in random distillation have not considered how the particular choice of target pairs affects the transformation, and here we develop a strategy for distilling into \textit{general} configurations of target pairs.  We completely solve the problem of determining the optimal distillation probability for all three qubit configurations and most four qubit configurations when $x_0=0$.  Our proof involves deriving new entanglement monotones defined on the set of four qubit W-class states.  As an additional application of our results, we present new upper bounds for converting a generic W-class state into the standard W state $\ket{W_N}=\sqrt{\frac{1}{N}}\left(\ket{10...0}+...+\ket{00...1}\right)$.
\end{abstract}

\maketitle

\section{Introduction}

In quantum information processing, the two-qubit EPR state $\ket{\Phi}=\sqrt{\frac{1}{2}}\left(\ket{00}+\ket{11}\right)$ provides a key resource for performing non-classical tasks such as teleportation \cite{Bennett-1993a} and super-dense coding \cite{Bennett-1992b}.  Thus, for a multi-partite state $\ket{\varphi}_{1...N}$, it is important to know the optimal ways in which EPR entanglement can be obtained between two parties without having to introduce any more entanglement into the system.  This latter constraint is known as the LOCC constraint because it requires each party to perform only local quantum operations (LO) while coordinating their operations through classical communication (CC). 

In general, the optimal conversion of $\ket{\varphi}_{1...N}$ into bipartite entanglement depends on which two final parties are left sharing the entanglement.  One scenario to consider is when two specific parties are designated as the target pair, and the transformation is considered a success if and only if these two parties end up sharing the state $\ket{\Phi}$.  A transformation of this sort is known as a \textbf{specified-pair distillation}.  In this setting, an important problem is to determine the greatest success probability $p_{ij}$ for which the conversion
\[\ket{\varphi}_{1...N}\to\ket{\Phi^{(ij)}}\]
is possible by LOCC.  Here, $\ket{\Phi^{(ij)}}$ denotes an EPR pair between parties $i$ and $j$, and it is assumed that all other parties are in some product state.  While no full solution to this problem is known, some partial results exist \cite{Gour-2005a, Chitambar-2010a}.  

\begin{figure}[b]
\includegraphics[scale=0.6]{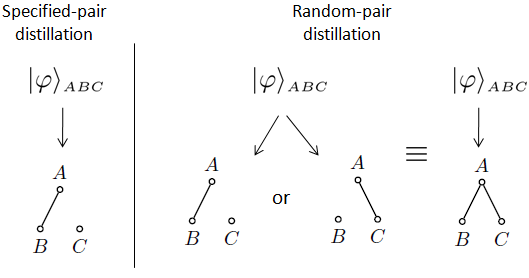}% Here is how to import EPS art
\caption{\label{RDABACtutorial}
A specified-pair versus random-pair distillation.  For random distillations, it is convenient to combine all the desired outcomes into one configuration graph $\mathcal{G}=(V,E)$ whose edge set encodes the target pairs.  Here, the target pairs are AB and AC.  The ``$\equiv$'' indicates equivalent representations.} 
\end{figure}

A more general question can be posed by allowing the two EPR-entangled parties to vary across the different outcomes in the transformation (see Fig. \ref{RDABACtutorial}).  Any transformation of this form is known as a \textbf{random-pair distillation} (or just simply \textit{random distillation}) because the final two entangled parties are \textit{a priori} unspecified.  Additional constraints to the problem can be added by demanding that the possible target pairs be limited only to some particular subset of all possible pairs.  For example, in the random distillation of Fig. \ref{RDABACtutorial}, the transformation is considered a success only if AB or AC obtain an EPR pair, and not if BC become EPR entangled.  For an $N$-party system, a random distillation can be written as
\begin{equation}
\label{Eq:trans}
\ket{\varphi}_{1...N}\to\{p_{ij},\ket{\Phi^{(ij)}}\}_{(i,j)\in E}
\end{equation}
where $p_{ij}$ is the probability of obtaining $\ket{\Phi^{(ij)}}$ and $E\subset [N]\times [N]$ is some designated set of target bipartite pairs.  The transformation is considered a success if an EPR state is obtained by any pair in $E$.

A convenient way to represent random distillations is through a \textbf{configuration graph} $\mathcal{G}=(V,E)$.  Each party $k$ is identified with a node $v_k\in V$, and an edge $e_{jk}\in E$ connects $v_j$ and $v_k$ if parties $j$ and $k$ form a desired target pair in the distillation (see Fig. \ref{RDABACtutorial}).  It should be emphasized that we are strictly dealing with a single copy of $\ket{\varphi}_{1...N}$, and each edge corresponds to one possible outcome.  Variations to this question in the asymptotic regime have been studied elsewhere \cite{Smolin-2005a,Yang-2009a}.  Given some graph $\mathcal{G}$ and initial state $\ket{\varphi}_{1...N}$, the greatest success probability is given by:
\begin{equation}
P(\varphi,\mathcal{G}):=\sup\sum_{(i,j)\in E} p_{ij}
\end{equation}
where the supremum is taken across all LOCC protocols.

\begin{figure}[b]
\includegraphics[scale=0.6]{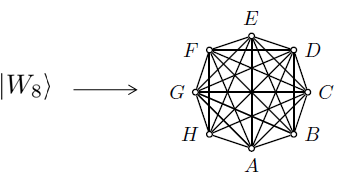}% Here is how to import EPS art
\caption{\label{RDcomplete}
An $N=8$ example of the ``complete-type'' distillations considered by Fortescue and Lo in Ref. \cite{Fortescue-2007a}.  Such a transformation is a success if \textit{any} two parties become EPR entangled, and this can be achieved with a probability arbitrarily close to one.  Previous research has not considered more general types of configuration graphs than this.} 
\end{figure}

The subject of single-copy random distillation was first introduced and subsequently studied by Fortescue and Lo \cite{Fortescue-2007a, Fortescue-2008a, Fortescue-2009a}.  One prominent finding of their work is that random distillations are, in general, strictly more powerful than specified-pair distillations.  Perhaps the most dramatic example of this is the $N$-qubit state $\ket{W_N}=\sqrt{\frac{1}{N}}\left(\ket{10...0}+\ket{01...0}+...+\ket{00...1}\right)$ and its random distillation into EPR pairs shared between any two parties (see Fig. \ref{RDcomplete}).  In terms of the terminology introduced above, the initial state is $\ket{W_N}$, and the configuration graph is complete (each node connected to every other) such that the conversion is a success if any two parties become EPR entangled.  Fortescue and Lo were able to show that this transformation can be completed with probability arbitrarily close to one \cite{Fortescue-2007a}.  On the other hand, for any two parties, the optimal specified-pair distillation probability is $2/N$. 

In this article, we turn to one largely unexplored question in Fortescue and Lo's work which is the random distillation to \textit{general} configuration graphs, and not just complete graphs.  Specifically, we consider how the target configuration graph affects the random distillation in terms of overall success probability as well as the actual LOCC protocol the parties implement during the transformation.  For example, one particular problem we are able to solve is the four qubit random distillation depicted in Fig. \ref{RDABCDtutorial} which was left as an open problem in Ref. \cite{Fortescue-2009a, Oppenheim-2011a}.

\begin{figure}[t]
\includegraphics[scale=0.6]{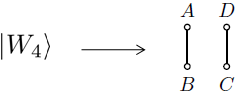}% Here is how to import EPS art
\caption{\label{RDABCDtutorial}
In Section \ref{Sect:fourqubits} we show that the optimal LOCC probability of achieving this transformation is $2/3$, thus resolving an open problem in Refs. \cite{Fortescue-2009a, Oppenheim-2011a}.  The initial state is $\ket{W_4}=1/2(\ket{1000}+\ket{0100}+\ket{0010}+\ket{0001})$.} 
\end{figure} 

Our focus is on the single-copy random distillation of $N$-party W-class of states, which is the collection of all states reversibly obtainable from $\ket{W_N}$ with a nonzero probability by LOCC.  The choice to limit investigation to this class of states is motivated by multiple factors.  First, from an experimental perspective, W-type entanglement seems relatively easier to generate than other forms of multipartite entanglement, with the state $\ket{W_4}$ already being realized in the laboratory \cite{Wieczorek-2009a}.  In $N>4$ qubit systems, setups have also been proposed for the production of W-class states \cite{Bastin-2009a}.  And for the particular task of random distillation, Fortescue has devised an experimental implementation of W-type random distillation using currently available technology, e.g. ion trap quantum computers \cite{Fortescue-2009a}.  Second, as we will see in the next section, W-class states have a very simple structure which allows us to carefully analyze their behavior under LOCC evolution.  Finally, a large amount of previous research conducted by Fortescue and Lo on random distillations involved W-class states.  Thus, there is an established point of comparison for new results on the subject.   

We summarize our results and outline the structure of this article.
\begin{itemize}
\item In Section \ref{Sect:FLNotation}, we begin by reviewing the results of the Fortescue-Lo Protocol and a described generalization, as well as some related work by Kinta\c{s} and Turgut on entanglement transformations within the W-class \cite{Kintas-2010a}.
\item In Section \ref{Sect:LPO} we construct the ``Least Party Out'' Protocol which distills an arbitrary $N$-qubit W-class state given some target configuration $\mathcal{G}$.  The protocol is similar in nature to the Fortescue-Lo Protocol but we show it to be strictly stronger. 
\item In Section \ref{Sect:fourqubits}, we apply our protocol to three and four qubit systems to obtain the main results of the article.  Every possible three and four qubit configuration graph is considered, and we introduce new four qubit entanglement monotones to show that our protocol is optimal in most cases when $x_0=0$.  
\item In Section \ref{Sect:standardW}, we further apply our results to study the transformation $\ket{\varphi}_{1...N}\to\ket{W_N}$ where $\ket{\varphi}$ is a generic W-class state.  New upper bounds on the optimal conversion probability are obtained.  
\item In the Conclusion, we return to the question of LOCC versus separable operations investigated more heavily in our companion paper \cite{Chitambar-2011b}.  Throughout this article, we will also recall a few other results from that paper.
\end{itemize}

\section{Previous Results and Notation}
\label{Sect:FLNotation}

\subsection*{The Generalized Fortescue-Lo Protocol}

In Ref. \cite{Fortescue-2007a}, Fortescue and Lo developed a protocol which randomly distills the state $\ket{W_N}=\sqrt{\frac{1}{N}}\left(\ket{10...0}+\ket{01...0}+...+\ket{00...1}\right)$ according to a complete configuration graph with success probability arbitrarily close to one (see Fig. \ref{RDcomplete}).  We briefly review the case when $N=3$.  For some $\epsilon>0$, the parties locally perform the measurement given by $M_1=diag[\sqrt{1-\epsilon},1]$ and $M_2=diag[\sqrt{\epsilon},0]$.  If all parties obtain outcome ``1'', the final state is the original state $\ket{W_3}$.  The parties then repeat the same measurement again.  On the other hand if only two parties obtain outcome ``1'', then this pair is left EPR entangled.  But, if one or fewer parties obtain outcome ``1'', all entanglement is destroyed and transformation is a failure.  With the possible recursive step, this protocol can continue for an indefinite number of measurement rounds.  In the end, the total probability of obtaining some EPR pair is $1-O(\epsilon)$.  For $N>3$, the protocol generalizes and likewise the probability of success is $1-O(\epsilon)$.  Here, the probabilities are distributed equally among all possible pairs; that is, with probability $\binom{N}{2}^{-1}-O(\epsilon)$, any two parties $i$ and $j$ obtain an EPR pair.

In Ref. \cite{Fortescue-2009a}, Fortescue briefly considered the problem of applying their protocol to more general configuration graphs, but only a limited discussion is given.  Nevertheless, for a general outcome configuration graph $\mathcal{G}=(V,E)$, we can here describe an obvious way to apply the Fortescue-Lo Protocol.  Starting with the state $\ket{W_N}$, it is converted with equal probability into the $\binom{N}{N-1}$ different $\ket{W_{N-1}}$ states.  Here the difference between these states lies in which of the $N$ parties are entangled.  If all the entangled parties in a particular $\ket{W_{N-1}}$ state are connected according to the graph $\mathcal{G}$, then the state is broken into EPR pairs with probability arbitrarily close to one.  Otherwise, it is converted into the $\binom{N-1}{N-2}$ different $\ket{W_{N-2}}$ states.  This process continues until $\ket{W_3}$ states are obtained.  Either all these parties sharing the $\ket{W_3}$ state are connected in $\mathcal{G}$ or at most two are.  In the former case, EPR pairs are obtained with probability $\approx 1$ whereas in the former, the distillation can be completed with probability $2/3$.  

We will let $P_{FL}(W_N,\mathcal{G})$ denote the distillation success probability of this Generalized Fortescue-Lo Protocol for some configuration $\mathcal{G}$.  Obviously $P(W_N,\mathcal{G})\geq P_{FL}(W_N,\mathcal{G})$.  The ``Least Party Out'' protocol described in the next section will be able to obtain a greater success probability than $P_{FL}(W_N,\mathcal{G})$ in general, and thus tighten the lower bound on $P(W_N,\mathcal{G})$.

\subsection*{Additional notation and the K-T Monotones}

In an $N$-partite system, if a ``standard'' W state $\ket{W_M}$ is shared among parties $S\subset [N]:=\{1,2,...,N\}$ with $|S|=M$, we will often explicate this by writing $\ket{W^{(S)}_{|S|}}$.  Equivalently we can write this state as $\ket{W_{|S|}^{(\overline{T})}}$ where $T=[N]\setminus S$.  Also, we define
\[\ket{W_2^{(ij)}}:=\sqrt{\tfrac{1}{2}}\left(\ket{01}_{ij}+\ket{10}_{ij}\right)\] 
which is local unitarily (LU) equivalent to $\ket{\Phi^{(ij)}}$. 

We will often represent a generic W-class state $\sqrt{x_0}\ket{00...0}+\sqrt{x_1}\ket{01...0}+...+\sqrt{x_n}\ket{00...1}$ by an $N$-component vector:
\begin{align}
\vec{x}=(x_1,&x_2,...,x_N)\notag\\
&\updownarrow\notag\\
\sqrt{x_0}\ket{00...0}+\sqrt{x_1}&\ket{10...0}+...+\sqrt{x_n}\ket{00...1},
\end{align} and $x_0=1-\sum_{i=1}^Nx_i$.  More importantly, even after a basis change - $\ket{0}\to\ket{0'}$ and $\ket{1}\to\ket{1'}$ - the component values $\sqrt{x_i}$ always remain unchanged for $N\geq 3$ \cite{Kintas-2010a}.  When $N=2$, uniqueness can be ensured by demanding that $x_0=0$ and $x_1\geq x_2$.  Therefore, for any number of parties, the vector $\vec{x}$ uniquely specifies the state up to an LU transformation.  For the state $\vec{x}$, we denote
\[x_{n_1}=\max_{1\leq k\leq N} x_k.\]

By disregarding LU transformations and decomposing a general measurement into a sequence of binary outcome POVMS \cite{Anderson-2008a}, we can assume that a local measurement by party $k$ consists of two upper triangular matrices $\{M^{(k)}_1, M^{(k)}_2\}$ whose entries are
\begin{align}
\label{Eq:constraints}
M^{(k)}_1&=\begin{pmatrix}\sqrt{a_1} & b_1  \\0 & \sqrt{c_1}\end{pmatrix} & M^{(k)}_2&=\begin{pmatrix}\sqrt{a_2} & b_2  \\0 & \sqrt{c_2}\end{pmatrix}
\end{align}
with $a_1+a_2=1$ and $c_1+c_2\leq 1$, where equality is achieved by the latter if and only if $M^{(k)}_1$ and $M^{(k)}_2$ are both diagonal.  It is easy to see that this measurement by party $k$ on state $\sqrt{x_0}\ket{00...0}+\sqrt{x_1}\ket{10...0}+...+\sqrt{x_N}\ket{00...1}$ will transform the components as:
\begin{align}
\label{Eq:components}
x_k&\to\frac{c_\lambda}{p_\lambda}x_k, &x_j&\to\frac{a_\lambda}{p_\lambda}x_j\;\;\;1\leq j\not=k\leq N,
\end{align}
with $p_\lambda$ being the probability that outcome $\lambda$ occurs.  From this it is easy to see the following,
\begin{align}
\label{Eq:KTmono}
x_0&\leq\sum_\lambda p_\lambda x_{\lambda,0}& x_i&\geq\sum_\lambda p_\lambda x_{\lambda,i}
\end{align}
for all $1\leq i\leq N$.
We will refer to these as the K-T monotones after Kinta\c{s} and Turgut who first proved the inequalities \cite{Kintas-2010a}.

\section{The ``Least Party Out'' Protocol}
\label{Sect:LPO}
Here we describe our W-class random distillation protocol for a given configuration graph $\mathcal{G}$.  It's called the ``Least Party Out'' (LPO) protocol because it involves systematically removing parties from the $N$-party entanglement with a probability that decreases according to the number of edges connected to the party's node in $\mathcal{G}$.  For some group of parties $S$, we let $\mathcal{G}\setminus S$ denote the subgraph of $\mathcal{G}$ obtained by removing the nodes corresponding to the parties in $S$.

Our protocol can be divided into three phases.  Phase I takes a generic W-class state $\vec{x}$ and converts it into a state $\vec{x'}$ such that $x'_0=0$.  Phase II converts an $x_0=0$ W-class state into standard W states $\ket{W_{|S|}^{(S)}}$ for $2\leq |S|\leq N$ using an ``equal or vanish'' (e/v) measuring scheme.  Phase III then converts the standard W states into EPR pairs given by the configuration graph $\mathcal{G}$.  Phase III is largely inspired by the Fortescue-Lo Protocol in that it involves an indefinite round measurement procedure: each party performs a measurement which, with some probability, leaves the state invariant and thus subject to a repeated round of identical measurement, which again leaves the state invariant with some probability, etc.

\medskip

\noindent\textbf{Phase I: Remove $x_0$ component:}
Input $(\vec{x},\mathcal{G})$ where $\vec{x}$ is an $N$-partite W-class state and $\mathcal{G}$ is some configuration graph with $N$ nodes.  If $x_0=0$, proceed to Phase II.  Otherwise, choose some party $n_1$ with the largest component value to perform the measurement \eqref{Eq:constraints} with $a_1=c_1=\lambda$, $b_1=-\sqrt{\lambda\frac{x_0}{x_1}}$, and $c_2=0$.  The values for $a_2$, $b_2$ and $\lambda$ are fixed by the measurement being complete \cite{Cui-2010a}.  Outcome ``1'' occurs with probability
\begin{equation}
\label{Eq:x0prob}
\frac{2x_{n_1}(1-x_0)}{x_0+2x_{n_1}+\sqrt{x_0^2+4x_{n_1}x_0}}.
\end{equation} 
and the resultant state has no zeroth component.  For outcome ``2'', the state is either a product state, in which case the protocol halts as a failure, or the state is entangled with party $n_1$'s component being zero and the state still having a zeroth component.  In both cases, redefine $\vec{x}$ as the post-measurement state, but set $\mathcal{G}$ as $\mathcal{G}\setminus n_1$ only after outcome ``2''.  Repeat Phase I with input $(\vec{x},\mathcal{G})$.

\medskip

\noindent\textbf{Phase II: ``Equal or Vanish'' (e/v) Subroutine:}  Input $(\vec{x},\mathcal{G})$ where $x_0=0$ and $\vec{x}$ is shared between $2\leq|S|\leq N$ parties.
\begin{itemize}
\item[(1)] If there does not exist an isolated node $v_k$ in $\mathcal{G}$ (one without any outgoing edges), proceed to the next step (2).  Otherwise, when $|S|=2$ the protocol halts as a failure, and when $|S|>2$, the ``isolated'' party ``k'' performs the dis-entangling measurement $M^{(k)}_1=diag[1,0]$ and $M^{(k)}_2=diag[0,1]$.  If outcome ``1'' occurs, redefine $\vec{x}$ as the post-measurement state and set $\mathcal{G}$ as $\mathcal{G}\setminus k$; repeat the e/v subroutine on input $(\vec{x},\mathcal{G})$.  If outcome ``2'' happens, the protocol terminates as a failure.
\item[(2)]  If every component in $\vec{x}$ is maximal, then $\vec{x}$ is a standard W state $\ket{W_{|S'|}^{(S')}}$ and proceed to Phase III.  Otherwise, choose some party $k$ such that (i) $x_k$ is non-maximal, and (ii) party $k$ is connected to some party whose component is maximal.  If no party satisfies both these conditions, then choose some party $k$ which just satisfies condition (i).  He/she then performs a two-outcome measurement with operators $M_1^{(k)}=diag[\sqrt{\tfrac{x_{k}}{x_{n_1}}},1]$ and $M_2^{(k)}=diag[\sqrt{1-\tfrac{x_{k}}{x_{n_1}}},0]$.  Party $k$'s component value equals the maximum upon outcome ``1'' and vanishes upon outcome ``2''.  In both cases, redefine $\vec{x}$ as the post-measurement state, but set $\mathcal{G}$ as $\mathcal{G}\setminus k$ only after outcome ``2''.  Repeat the e/v subroutine on the new input $(\vec{x},\mathcal{G})$ (see Fig. \ref{evfigtree}).
\end{itemize}
For Phase II input $(\vec{x},\mathcal{G})$, the final success states of the e/v subroutine are $\ket{W^{(S')}_{|S'|}}$ for $2\leq |S'|\leq |S|$ where for each party $*$ such that $x_*=x_{n_1}$, either $*\in S'$ or no party in $S'$ is connected to $*$ in $\mathcal{G}$.  The latter case occurs when there is only one party with a maximum component and all parties connected to $*$ measure a ``vanish'' outcome; consequently, $*$ becomes an isolated party and removes itself from the system via step (1) above.  Let
\begin{align}
\lambda_{\vec{x},\mathcal{G}}(W_{|S'|}^{(S')}):=&\text{ the probability or obtaining $\ket{W^{(S')}_{|S'|}}$}\notag\\ &\text{ via the e/v subroutine for input $(\vec{x},\mathcal{G})$.}
\end{align}
Note that $\lambda_{\vec{x},\mathcal{G}}$ is a smooth function of the component values $x_i$ and can be explicitly computed from the measurement operators given above.  For example, $\lambda_{\vec{x},\mathcal{G}}(W_N)=N\frac{\prod_{k\not=n_1}x_k}{x_{n_1}^{N-2}}$.  Also, $\lambda_{W_N,\mathcal{G}}(W_M)=\delta_{MN}$.

\begin{figure}[t]
\includegraphics[scale=0.6]{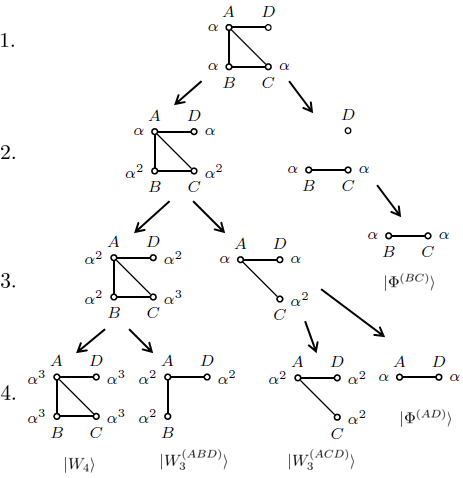}% Here is how to import EPS art
\caption{\label{evfigtree}
``Equal or Vanish'' Subroutine (Phase II) for the normalized state $\frac{1}{1+3\alpha}(\alpha,\alpha,\alpha,1)$ and the configuration graph with edges $\{AB,AC,AD, BC\}$.  1.  David's component is largest and Alice is a connected to him with a lesser component value.  She performs an e/v measurement.  2.  For outcome ``vanish'' (right branch) she is separated from the system, and since David is not connected to either Bob or Charlie, he immediately removes himself from the system leaving $\ket{\Phi^{(BC)}}$ with some probability.  For outcome ''equal'' (left branch) the components of all other parties receive a factor of $\alpha$, and Alice's component is now maximum equaling David's.  Bob is a connected party to Alice with a lesser component value and he performs an e/v measurement.  3.  Again, either Bob vanishes (right branch) or all other components except his receive a factor of $\alpha$.  In both cases, Charlie is then a connected party to Bob with a lesser component value and he performs an e/v measurement.  4.  The final outcome states along these branches are $\ket{W_4}$, $\ket{W_3^{(ABD)}}$, $\ket{W_3^{(ACD)}}$, and $\ket{\Phi^{(AD)}}$.}
\end{figure}

\medskip

\noindent\textbf{Phase III: Obtaining EPR Pairs:}  Input $(W_{|S|}^{(S)},\mathcal{G})$ with $\mathcal{G}$ having $|S|$ nodes and at least one edge.  If $|S|=2$, the state is an EPR pair and protocol halts as a success.  Otherwise, Phase III of the protocol is defined recursively such that for $|S|>2$, the procedure depends on a pre-defined random distillation protocol for $\ket{W_{|S'|}^{(S')}}$ with $|S'|<|S|$.  Let  
\begin{align}
P_{III}(W_{|S|}^{(S)},\mathcal{G}):=&\text{ the probability of distilling $\ket{W_{|S|}^{(S)}}$}\notag\\
&\text{  into $\mathcal{G}$ via the Phase III procedure.}\notag
\end{align}
If $|S|=2$, set $P_{III}(W_2,\mathcal{G})=1$ by definition.  For $|S|>2$, identify the party $k$ whose node in $\mathcal{G}$ has a least number of connected edges.  He/she performs the measurement with operators given by $M_1^{(k)}=diag[\sqrt{\alpha},1]$ and $M_2^{(k)}=diag[\sqrt{1-\alpha},0]$ where $\alpha$ is determined according to the discussion below Eq. \eqref{Eq:probG}.  Outcome ``2'' occurs with probability $(1-\alpha)\frac{|S|-1}{|S|}$ and the resultant state is $\ket{W_{|S|-1}^{(\overline{k})}}$.  Phase III is then repeated on this state and the reduced graph $\mathcal{G}\setminus k$.  

\begin{figure}[t]
\includegraphics[scale=0.6]{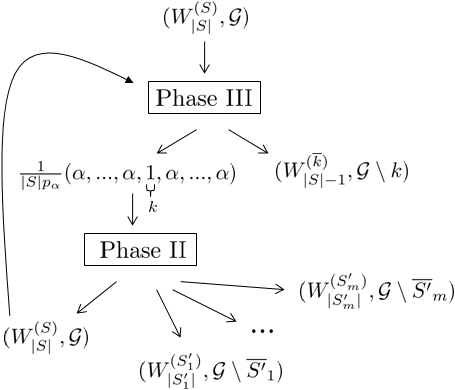}% Here is how to import EPS art
\caption{\label{Phase3fig}
Phase III receives an input state $W_{|S|}^{(S)}$ and a configuration graph $\mathcal{G}$.  Party $k$ performs an e/v measurement.  One outcome is a standard $W$ state with party $k$ removed, and the other is the state $\tfrac{1}{|S|p_\alpha}(\alpha,...,\alpha,1,\alpha,...,\alpha)$.  Phase II is applied on this state outputting either W states or a product (failure) state.  Phase III will next be initiated on each of the W states, and for any W state $W_{|S'|}^{(S')}$ with $|S'|<|S|$, the transformation success probability from this point onward is given by $P_{III}(W_{|S'|}^{(S')},\mathcal{G}\setminus \overline{S'})$; this value is already known by recursion.  However, for the state $W_{|S|}^{(S)}$, performing Phase III again will generate an indefinite loop, but one whose overall success probability converges to $\frac{f(\alpha)}{1-\alpha^{|S|-1}}$ (see Eqs. \eqref{Eq:falpha} and \eqref{Eq:probG}).}
\end{figure}

Outcome ``1'' happens with probability $p_\alpha=\frac{1+\alpha(|S|-1)}{|S|}$ and the post-measurement state is $\vec{y}_\alpha$, which (up to a permutation between party 1 and $k$) takes the form: $\tfrac{1}{|S|p_\alpha}(1,\alpha,\alpha,...,\alpha)$.  Party $k$ then has the largest component value in $\vec{y}_\alpha$, and the e/v subroutine (Phase II) is performed on the input $(\vec{y}_\alpha,\mathcal{G})$.  The e/v subroutine will either output the states $\ket{W^{(S')}_{|S'|}}$ where $|S'|<|S|$ with respective probabilities $\lambda_{\vec{y}_\alpha,\mathcal{G}}(W^{(S')}_{|S'|})$, or the original state $\ket{W_{|S|}^{(S)}}$ with probability $\alpha^{|S|-1}$.  In the former case, the Phase III procedure is performed on the input $(W^{(S')}_{|S'|},\mathcal{G}\setminus \overline{S'})$.  Accounting for all states with $|S'|<|S|$, their total distillation success probability is
\begin{align}
\label{Eq:falpha}
&f(\alpha) =(1-\alpha)\tfrac{|S|-1}{|S|}\cdot P_{III}(W_{|S|-1}^{(\overline{k})},\mathcal{G}\setminus k)\notag\\
&+\tfrac{1+\alpha(|S|-1)}{|S|}\sum_{2\leq |S'|<|S|}\lambda_{\vec{y}_\alpha,\mathcal{G}}(W^{(S')}_{|S'|}) P_{III}(W^{(S')}_{|S'|},\mathcal{G}\setminus \overline{S'}),
\end{align}
where the sum is taken over all subsets $S'$ such that $2\leq |S'|<|S|$ and either $k\in S'$ or no party in $S'$ is connected to $k$ in $\mathcal{G}$.  If the e/v subroutine outputs the original $\ket{W_{|S|}^{(S)}}$, repeat Phase III again on the same input $(W_{|S|}^{(S)},\mathcal{G})$.  This will generate an indefinite loop in which for each cycle, the probability of distillation success is $f(\alpha)$, and the probability of continuing on for another cycle is $\alpha^{|S|-1}$.  Therefore, the total success probability across all cycles is given by the geometric sum $\sum_{r=0}^\infty[\alpha^{|S|-1}]^rf(\alpha)$.  To maximize this value, we set

\begin{align}
\label{Eq:probG}
P_{III}(W_{|S|}^{(S)},\mathcal{G})&=\sup_{0\leq \alpha< 1}\frac{f(\alpha)}{1-\alpha^{|S|-1}}.
\end{align}

This determines the original value of $\alpha$ in the Phase III measurement operators: if \eqref{Eq:probG} obtains its supremum in the interval $[0,1)$, then $\alpha$ is chosen to be any of these critical points; if the supremum is obtained in the limit $\alpha\to 1$, then  $\alpha=1-\epsilon$ for any desired $\epsilon>0$.  The smaller the value of $\epsilon$, the closer the success probability approaches $P_{III}(W_{|S|}^{(S)},\mathcal{G})$.  Observe that when the supremum is obtained at $\alpha=0$, the first measurement by party $k$ will deterministically dis-entangle the party from the rest of the system.

It is also important to note that the optimization of \eqref{Eq:probG} can always be efficiently performed.  By the recursive construction, the values for $P_{III}(W_{|S|-1}^{(\overline{k})},\mathcal{G}\setminus k)$ and $P_{III}(W^{(S')}_{|S'|},\mathcal{G}\setminus \overline{S'})$ are just real numbers known \textit{a priori}.  Furthermore, the functions $\lambda_{\vec{y}_\alpha,\mathcal{G}}(W^{(S')}_{|S'|})$ are smooth functions of $\alpha$, and thus, Eq. \eqref{Eq:probG} represents a single-variable smooth function whose supremum value can be easily computed.  In total then, for a state $N$-partite state $\vec{x}$ with $x_0=0$ and configuration graph $\mathcal{G}$, the overall success probability of the LPO protocol is given by 
\begin{equation}
\label{Eq:LPOprob}
P_{LPO}(\vec{x},\mathcal{G}):=\sum_{2\leq |S'|\leq N}\lambda_{\vec{x},\mathcal{G}}(W^{(S')}_{|S'|})\cdot P_{III}(W^{(S')}_{|S'|},\mathcal{G}\setminus \overline{S'})
\end{equation}
where the sum is taken over all subsets $S'$ such that $2\leq |S'|\leq N$ and for each party $*$ such that $x_*=x_{n_1}$, either $*\in S'$ or no party in $S'$ is connected to $*$ in $\mathcal{G}$.

\section{Main results: the LPO Protocol on three and four qubits}
\label{Sect:fourqubits}

\noindent{\textbf{Summary of Results:}}  Before working through the LPO Protocol in detail on three and four qubit systems, we summarize the overall results.  For three qubits, the possible configuration graphs are depicted in Fig. \ref{RDABAC}, and upper bounds on the transformation success probabilities are given by Eqs. \eqref{Eq:RDABAC} and \eqref{Eq:ABACBC} respectively.  In both cases, when $x_0=0$ these bounds can be approached arbitrarily close.

For four qubits, there are six different families of configurations depicted in Figs. \ref{RDABACAD} -- \ref{RDAll}.  For states with $x_0=0$, we have completely solved the random distillation problem for all configurations except VI.  Upper bounds on configurations I -- V are given by Eqs. \eqref{Eq:RDABACAD} --  \eqref{Eq:kappa} respectively.  

\subsection*{Three qubits}

\begin{figure}[htbp]
\includegraphics[scale=0.6]{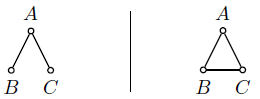}% Here is how to import EPS art
\caption{\label{RDABAC}
(Left) Configuration $\mathcal{G}_{\wedge}$.  (Right) Configuration $\mathcal{G}_{\triangle}$.  An upper bound on the success probability is given by Eqs. \eqref{Eq:RDABAC} and \eqref{Eq:ABACBC} respectively which is effectively tight when $x_0=0$.  For $\ket{W_3}$, these probabilties are $2/3$ and $1$ respectively.} 
\end{figure} 

As a first example of the LPO protocol, consider the state $\ket{W_3}$ and the configuration graph given by $\mathcal{G}_{\wedge}$ in Fig. \ref{RDABAC}.  In Phase III of the protocol, we can choose the ``least party'' to be either Bob or Charlie (say it's Charlie).  He performs a measurement as described above, and either $\ket{\Phi^{(AB)}}$ is obtained or the post-measurement state is $\vec{y}_\alpha=\tfrac{1}{2\alpha+1}(\alpha,\alpha,1)$.  For the latter, the e/v subroutine obtains $\ket{\Phi^{(AC)}}$ with probability $\lambda_{\vec{y}_\alpha,\mathcal{G}_\wedge}(W_2^{(AC)})=\frac{2\alpha(1-\alpha)}{2\alpha+1}$.  Thus, we have 
 $f(\alpha)=\frac{2}{3}(1-\alpha)+\frac{2}{3}(1-\alpha)\alpha$ and therefore $\frac{f(\alpha)}{1-\alpha^{2}}$ takes a constant value of $\frac{2}{3}$.  Hence, $\alpha$ can be chosen as $0$ in Charlie's measurement and 
\begin{equation}
\label{Eq:W3wedge}
P_{III}(W_3,\mathcal{G}_\wedge)=\frac{2}{3}.
\end{equation}  
 
For a more general state $\vec{x}=(x_A,x_B,x_C)$ with $x_A\geq x_B\geq x_C$ and $x_0=0$, we have $\lambda_{\vec{x},\mathcal{G}_\wedge}(W^{(AB)}_{2})=2x_B(1-\frac{x_C}{x_A})$, $\lambda_{\vec{x},\mathcal{G}_\wedge}(W^{(AC)}_{2})=2x_C(1-\frac{x_B}{x_A})$, and $\lambda_{\vec{x},\mathcal{G}_\wedge}(W_{3})=3\frac{x_Bx_C}{x_A}$.  Therefore, by Eq. \ref{Eq:LPOprob}, the distillation probability is given by 
\begin{align}
\label{Eq:RDABAC}
P_{LPO}(\vec{x},\mathcal{G}_\wedge)&=2x_B(1-\frac{x_C}{x_A})+2x_C(1-\frac{x_B}{x_A})+2\frac{x_Bx_C}{x_A}\notag\\
&=2x_B+2x_C-2\frac{x_Bx_C}{x_A}.
\end{align}
One might wonder if this probability is optimal. It turns out that the answer is yes. See Eq. \eqref{Eq:RDABBDAD} below and the discussion there. 

On the other hand, consider configuration $\mathcal{G}_\triangle$ given in Fig. \ref{RDABAC}.  We can still choose Charlie as the ``least'' party, and this time, the possible EPR success states of the e/v subroutine are $\ket{\Phi^{(AC)}}$ and $\ket{\Phi^{(BC)}}$.  We have $\lambda_{\vec{y}_\alpha,\mathcal{G}_\triangle}(W_2^{(AC)})=\lambda_{\vec{y}_\alpha,\mathcal{G}_\triangle}(W_2^{(AC)})=\frac{2\alpha(1-\alpha)}{2\alpha+1}$ and so $f(\alpha)=\frac{2}{3}(1-\alpha)+\frac{4}{3}(1-\alpha)\alpha$.  Thus,
\[P_{LPO}(W_3,\mathcal{G}_\triangle)=\sup_{0\leq \alpha< 1}\frac{2}{3}\frac{(1-\alpha)(1+2\alpha)}{1-\alpha^{2}}=1.\]
This value is obtained in the limit $\alpha\to 1$ which means the LPO protocol calls for infinitesimal measurements with $\alpha=1-\epsilon$.  Hence, for three qubits, the LPO protocol reduces to the Fortescue-Lo Protocol for distilling the state $\ket{W_3}$.  

Since $P_{III}(W_3,\mathcal{G}_\triangle)=1$ for the three-edge configuration, when considering the state $\vec{x}=(x_A,x_B,x_C)$ with $x_A\geq x_B\geq x_C$ and $x_0=0$, we obtain the distillation probability    
\begin{align}
\label{Eq:ABACBC}
P_{LPO}(\vec{x},\mathcal{G}_\triangle)=&2x_B(1-\frac{x_C}{x_A})+2x_C(1-\frac{x_B}{x_A})+3\frac{x_Bx_C}{x_A}\notag\\
&=2x_B+2x_C-\frac{x_Bx_C}{x_A}.
\end{align}
Just as with the configuration graph $\mathcal{G}_\wedge$, this probability is optimal as we will see from Eq. \eqref{Eq:RDABBDAD} below.  We now turn to four qubits where, unlike the two cases just examined, there exists configurations for which $f(\alpha)$ obtains a maximum in the interval $(0,1)$. 

\subsection*{Four qubits}

Next, we apply the LPO protocol to four qubit W-class states.  We will only consider a subset of possible configuration graphs, but any other can be obtained by a permutation of parties.

\medskip

\noindent\textbf{Configurations I (Fig. \ref{RDABACAD}):}

\begin{figure}[htbp]
\includegraphics[scale=0.6]{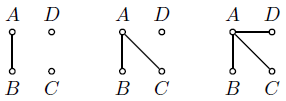}% Here is how to import EPS art
\caption{\label{RDABACAD}
Let $\mathcal{G}_I$, $\mathcal{G}_I'$ and $\mathcal{G}_I''$ be the first, second and third of the above configurations respectively.  An upper bound on the success probability is given by Eq. \eqref{Eq:RDABACAD} which is effectively tight when $x_0=0$.  For $\ket{W_4}$, this probability is $1/4$ for each configuration.} 
\end{figure} 

For a generic W-class state $\vec{x}=(x_A, x_B, x_C, x_D)$, whenever $x_A<x_j$ for some $j\in \{B,C,D\}$, an upper bound on distilling to any of these configurations is $2 x_{A}$ by the K-T monotones.  However, when $x_A$ is the largest component value, we have
\begin{align}
\label{Eq:RDABACAD}
P(\vec{x},\mathcal{G}_I)&\leq 2 x_{B}\notag\\
P(\vec{x},\mathcal{G}'_I)&\leq 2x_A-2\tfrac{(x_A-x_B)(x_A-x_C)}{x_A}\notag\\
P(\vec{x},\mathcal{G}''_I)&\leq 2x_A-2\tfrac{(x_A-x_B)(x_A-x_C)(x_A-x_D)}{x_A^2}
\end{align}
as proven in Ref. \cite{Chitambar-2011b}.  When $x_0=0$, these are precisely the rates obtained by the LPO Protocol, and so our protocol is optimal for such states.  Note that setting $x_D=0$ proves \eqref{Eq:RDABAC} to be optimal.

\medskip

\noindent\textbf{Configuration II (Fig. \ref{RDABBDAD}):}
\begin{figure}[htbp]
\includegraphics[scale=0.6]{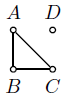}% Here is how to import EPS art
\caption{\label{RDABBDAD}
Let $\mathcal{G}_{II}$ be the above configuration.  An upper bound on the success probability is given by Eq. \eqref{Eq:RDABBDAD} which is effectively tight when $x_0=0$.  For $\ket{W_4}$, this probability is $3/4$} 
\end{figure} 

For a generic W-class state $\vec{x}=(x_A, x_B, x_C, x_D)$, if we assume without loss of generality that $x_A\geq x_B\geq x_C$, then we have
\begin{equation}
\label{Eq:RDABBDAD}
P(\vec{x},\mathcal{G}_{II})\leq 1-x_0-x_D-\frac{(x_A-x_B)(x_A-x_C)}{x_A},
\end{equation}
as also proven in Ref. \cite{Chitambar-2011b}.  When $x_0=0$, the LPO protocol can approach this upper bound arbitrarily close.  Note that this also proves Eq. \eqref{Eq:ABACBC} to be optimal. 

\medskip

\noindent\textbf{Configurations III (Fig. \ref{RDABBCCDAD}):}
\begin{figure}[htbp]
\includegraphics[scale=0.6]{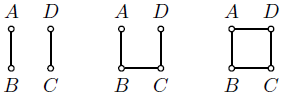}% Here is how to import EPS art
\caption{\label{RDABBCCDAD}
Let $\mathcal{G}_{III}$ be any of the above configurations.  In each of these, $(A,C)$ and $(B,D)$ are unconnected pairs.   An upper bound on the success probability is given by Eq. \eqref{Eq:tau} which is effectively tight when $x_0=0$.  For $\ket{W_4}$, this probability is $2/3$ for each of these configurations.} 
\end{figure} 

A common feature to all of these configurations is that for each party, there is at least one other party to whom he/she is not connected.  We will refer to such a pair as unconnected.  For example $(A,C)$ and $(B,D)$ form unconnected pairs in each of the above configurations.  We introduce the following entanglement monotones to put an upper bound on the probability for transformations of Configurations III.  For a generic W-class state $\vec{x}=(x_A,x_B,x_C,x_D)$, let $n_1$ be some party whose component is maximum, $n_1'$ the party unconnected to $n_1$ with largest component value, and $p$ and $p'$ the other two parties.  For definitiveness, if party $n_1$ has two unconnected parties (which is possible in the first two of Configurations III), take $p'$ to be the other one besides $n_1'$.  Define the function 
\begin{align}
\tau(\vec{x})&=2x_{p}+2x_{p'}-2\frac{x_{p}x_{p'}}{x_{n_1}}+\frac{2}{3}\frac{x_{p}x_{p'}x_{n_1'}}{x_{n_1}^2}.\notag
\end{align}
Note that $\tau(W_4)=2/3$.
\begin{theorem}
\label{Thm:tau}
The function $\tau$ is an entanglement monotone.
\end{theorem}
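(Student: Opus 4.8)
The plan is to verify directly the defining property of an entanglement monotone, namely that $\tau$ does not increase on average under local operations. Since $\tau$ depends only on the LU-invariant components $x_i$ it is automatically invariant under local unitaries, and because any protocol decomposes into single-party two-outcome POVMs \cite{Anderson-2008a}, it suffices to show for every party $k$ and every measurement of the form \eqref{Eq:constraints} that
\[
\tau(\vec{x}) \geq p_1\,\tau(\vec{x}_1) + p_2\,\tau(\vec{x}_2),
\]
with $\vec{x}_\lambda$ the post-measurement state. Using \eqref{Eq:components} I would record the transformation through the scale factors $t_\lambda = a_\lambda/p_\lambda$, applied to every non-measuring component, and $s_\lambda = c_\lambda/p_\lambda$, applied to $x_k$; the completeness relations $a_1+a_2=1$ and $c_1+c_2\leq 1$ together with $p_1+p_2=1$ then read $\sum_\lambda p_\lambda t_\lambda = 1$ and $\sum_\lambda p_\lambda s_\lambda =: \sigma \leq 1$, with $t_\lambda,s_\lambda\geq 0$. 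As $\tau$ is independent of $x_0$ and homogeneous of degree one in the $x_i$, the inequality need only be checked over this parameter set.

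The decisive structural fact is that all three non-measuring components are rescaled by the single common factor $t_\lambda$, so their relative order -- and therefore the membership of the two unconnected pairs -- is unchanged; only the measuring party can overtake another in magnitude. This confines the problem to a finite case analysis indexed by the role of party $k$ (whether $k=n_1$, $k=n_1'$, or $k\in\{p,p'\}$) and by whether the outcome forces a relabelling of $n_1,n_1',p,p'$. Writing $Q=\tfrac{x_p x_{p'}}{x_{n_1}}$ and $R=\tfrac{x_p x_{p'} x_{n_1'}}{x_{n_1}^2}$, the transition-free cases collapse to clean factorizations: for $k=n_1'$ the deficit is $\tfrac23 R(1-\sigma)\geq 0$, while for $k\in\{p,p'\}$ it is $(1-\sigma)\bigl(2x_p-2Q+\tfrac23 R\bigr)\geq 0$, the second factor being non-negative because $x_{p'},x_{n_1'}\leq x_{n_1}$. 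These settle the bulk of the verification immediately.

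The hard part will be the measurement by the maximal party $k=n_1$ together with the relabellings it can trigger. Here the deficit no longer factors; a direct computation gives
\[
-2Q\left(1-\sum_\lambda p_\lambda \frac{t_\lambda^2}{s_\lambda}\right)+\tfrac23 R\left(1-\sum_\lambda p_\lambda \frac{t_\lambda^3}{s_\lambda^2}\right),
\]
whose two terms have opposite signs, since Cauchy--Schwarz against $\sum_\lambda p_\lambda t_\lambda=1$ forces both $\sum_\lambda p_\lambda t_\lambda^2/s_\lambda\geq 1$ and $\sum_\lambda p_\lambda t_\lambda^3/s_\lambda^2\geq 1$. The inequality therefore cannot be read off term by term; it must be closed by feeding in the ordering constraints that delimit the transition-free regime -- for instance $s_\lambda x_{n_1}\geq t_\lambda x_{n_1'}$, which keeps $R/Q = x_{n_1'}/x_{n_1}$ small in precisely the regime where the $t_\lambda^3/s_\lambda^2$ moment is large.

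When instead an outcome pushes $x_k$ past $x_{n_1'}$ or past the opposite pair, $\tau$ switches to a different labelling of its arguments. Since $\tau$ is continuous across these boundaries -- the competing formulas agree when the relevant components coincide -- but is in general neither the minimum nor the maximum of the two expressions, such outcomes have to be estimated directly on each smooth piece, once more using the orderings valid in that regime. Assembling the finitely many cases, each a one- or two-variable rational inequality in $t_\lambda,s_\lambda$ subject to the two moment constraints, establishes the averaging inequality for an arbitrary single-party measurement and hence that $\tau$ is an entanglement monotone. I expect the coupling between the measurement parameters and the component orderings in the $k=n_1$ case to be the genuine obstacle, as it is exactly there that the naive term-by-term bound breaks down.
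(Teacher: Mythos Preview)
Your setup is correct and the easy cases are handled cleanly: for $k=n_1'$ the deficit $\tfrac{2}{3}R(1-\sigma)$ is right, and for $k\in\{p,p'\}$ (without relabelling) the factorisation $(1-\sigma)\bigl(2x_k-2Q+\tfrac{2}{3}R\bigr)$ is also correct. But the proposal stalls precisely where the content is. For $k=n_1$ you arrive at the two-moment expression with opposite signs and then say it ``must be closed by feeding in the ordering constraints''; the relabelling branches are likewise described only as ``estimated directly on each smooth piece''. Neither of these is carried out, and for \emph{arbitrary} measurements the coupling you flag between $\sum a_\lambda^2/c_\lambda$, $\sum a_\lambda^3/c_\lambda^2$, and the regime constraints $c_\lambda x_{n_1}\geq a_\lambda x_j$ is genuinely delicate --- it is not obvious that it closes without further ideas.

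The paper avoids this obstacle entirely by a device you do not use: the universality of weak measurements \cite{Bennett-1999a,Oreshkov-2005a}. Because any LOCC protocol factors through a sequence of infinitesimal binary POVMs, it suffices to verify non-increase for $(a_1,c_1,a_2,c_2)$ in a neighbourhood of $(\tfrac12,\tfrac12,\tfrac12,\tfrac12)$. In that regime the labelling $n_1,n_1',p,p'$ is preserved (the tied-component boundaries are handled separately as Cases~II and~III), and the deficit for $k=n_1$ is controlled by a second-order Taylor expansion, which collapses to $8(a-c)^2\bigl(1-\tfrac{x_{n_1'}}{x_{n_1}}\bigr)\tfrac{x_p x_{p'}}{x_{n_1}}\geq 0$. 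The remaining parties are then dispatched by the K--T monotones exactly as you do, but now without any relabelling worry. So the missing ingredient in your plan is this weak-measurement reduction: with it, your hard case becomes a two-line computation; without it, the proposal is a plausible outline whose central inequality is not established.
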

\begin{proof}
See Appendix \ref{Apx:tau}
\end{proof}

As a result of this theorem, we have that for a state $\vec{x}$,
\begin{equation}
\label{Eq:tau}
P(\vec{x},\mathcal{G}_{III})\leq \tau(\vec{x}).
\end{equation}
For the LPO Protocol, first consider the initial state $\ket{W_4}$.  In each of the configurations, either party A or D can be chosen as the ``least'' party.  Regardless of the choice, we have 
\[\sum_{(i,j)\in E}\lambda_{\vec{y}_\alpha,\mathcal{G}_{III}}(W_2^{(ij)})=(4\alpha+2\alpha^2)(1-\alpha).\]
Here $E$ denotes the edge set of whatever configuration is considered.  By Eq. \eqref{Eq:W3wedge}, we also know that $P_{III}(W_3^{(BCD)},\mathcal{G}\setminus A)=2/3$.   Thus, 
\[P_{III}(W_4,\mathcal{G}_{III})=\sup_{0\leq \alpha< 1}\frac{\tfrac{1}{2}+\alpha+\tfrac{1}{2}\alpha^2}{1+\alpha+\alpha^2}=\frac{2}{3}\]
which obtains this value as $\alpha\to 1$.  Thus, for any $\epsilon>0$, 
\[2/3-\epsilon<P(W_4,\mathcal{G}_{III})\leq 2/3.\]

When $x=0$ and a state $\vec{x}=(x_A,x_B,x_C,x_D)$ is considered, it is straight forward to compute the probabilities for $\lambda_{\vec{x},\mathcal{G}_{III}}(W_{|S|}^{(S)})$.  Doing so and using Eq. \eqref{Eq:LPOprob} shows that the probability $\tau(\vec{x})$ can be approached arbitrarily close using the LPO protocol. 

\medskip

\noindent\textbf{Configuration IV (Fig. \ref{RDABBDADACBC}):}
\begin{figure}[htbp]
\includegraphics[scale=0.6]{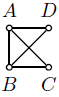}% Here is how to import EPS art
\caption{\label{RDABBDADACBC}
Let $\mathcal{G}_{IV}$ be the above configuration.  We say two parties are edge complementary if their nodes have a different number of connected edges.  For example, $A$ is edge complementary to both $C$ and $D$.  An upper bound on the success probability is given by Eq. \eqref{Eq:Gamma} which is effectively tight when $x_0=0$.  For $\ket{W_4}$, this probability is $5/6$.} 
\end{figure}

We first consider the transformation of the standard W state $\ket{W_4}$.  The Phase II probabilities are
\begin{align}
\lambda_{\vec{y}_\alpha,\mathcal{G}_{IV}}(W_2^{(AD)})&=\frac{1}{1+3\alpha}\cdot2\alpha(1-\alpha)^2,\notag\\ \lambda_{\vec{y}_\alpha,\mathcal{G}_{IV}}(W_2^{(BD)})&=\frac{1}{1+3\alpha}\cdot2\alpha(1-\alpha)^2,\notag\\ \lambda_{\vec{y}_\alpha,\mathcal{G}_{IV}}(W_3^{(ABD)})&=\frac{1}{1+3\alpha}\cdot3\alpha^2(1-\alpha),\notag\\ 
\lambda_{\vec{y}_\alpha,\mathcal{G}_{IV}}(W_3^{(ACD)})&=\frac{1}{1+3\alpha}\cdot3\alpha^2(1-\alpha),\notag\\
\lambda_{\vec{y}_\alpha,\mathcal{G}_{IV}}(W_3^{(BCD)})&=\frac{1}{1+3\alpha}\cdot3\alpha^2(1-\alpha).\notag
\end{align}
This gives
\begin{equation*}
P_{III}(W_4,\mathcal{G}_{IV})=\sup_{0\leq \alpha< 1}\frac{f(\alpha)}{1-\alpha^3}=\frac{\tfrac{3}{4}+\alpha+\tfrac{3\alpha^2}{4}}{1+\alpha+\alpha^2}=5/6
\end{equation*}
for which the value is obtained as $\alpha\to 1$.

Consider a generic W-class state $\vec{x}=(x_A,x_B,x_C,x_D)$.  We say that a party is edge complementary to a party if there corresponding nodes in $\mathcal{G}_{IV}$ have a different number of connected edges.  For the particular configuration $\mathcal{G}_{IV}$, let $n_1$ denote some party with the largest component, $n_1'$ the party edge complementary to $n_1$ with the largest component, and $e_2$ and $e_3$ the other two parties having 2 and 3 edges respectively.  Define the function:
\begin{align*}
\Gamma(\vec{x})=\begin{cases} 2x_{e_3}+(x_{e_2}+x_{n_1'})\left(2-\frac{x_{e_3}}{x_{n_1}}\right)-2\frac{x_{n_1'}x_{e_2}}{x_{n_1}}\\\hspace{.5cm}+\frac{4}{3}\frac{x_{e_2}x_{e_3}x_{n_1'}}{x_{n_1}^2}\;\;\;\text{if $n_1$ has 3 connected edges},\\
2x_{n_1'}+2x_{e_3}-\frac{x_{n_1'}x_{e_3}}{x_{n_1}}+\frac{x_{e_2}x_{e_3}x_{n_1'}}{3x_{n_1}^2}\;\;\;\text{if $n_1$ has 2}\\
\hspace{.5cm}\text{connected edges}.
\end{cases}
\end{align*}
Note that $\Gamma(W_4)=5/6$.
\begin{theorem}
\label{Thm:gamma}
The function $\Gamma$ is an entanglement monotone.
\end{theorem}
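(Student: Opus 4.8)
The plan is to prove monotonicity directly from the canonical description of W-class measurements. By the decomposition of a general local operation into a sequence of two-outcome measurements of the form \eqref{Eq:constraints} (disregarding the LU freedom, which leaves every $x_i$ fixed), it suffices to show that a single measurement by a single party $k$ cannot increase the average of $\Gamma$; that is, $\Gamma(\vec{x})\geq p_1\Gamma(\vec{x}_1)+p_2\Gamma(\vec{x}_2)$, where the post-measurement components are given by \eqref{Eq:components} and $p_\lambda=c_\lambda x_k+a_\lambda(1-x_k)$. Since $\Gamma$ depends only on the four component values, this reduction applies verbatim to generic states with $x_0\neq 0$.

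First I would record the structural fact that each branch of $\Gamma$ is homogeneous of degree one in $(x_{n_1},x_{n_1'},x_{e_2},x_{e_3})$. Using this, the scaling in \eqref{Eq:components} lets me rewrite $p_\lambda\Gamma(\vec{x}_\lambda)=\Gamma(\tilde{x}^{(\lambda)})$, where $\tilde{x}^{(\lambda)}$ is the unnormalized vector with $k$-th component $c_\lambda x_k$ and every other component $a_\lambda x_j$. Because $a_1+a_2=1$ and $c_1+c_2\leq 1$, the vectors $\tilde{x}^{(1)}$ and $\tilde{x}^{(2)}$ sum to a vector agreeing with $\vec{x}$ in every coordinate except the $k$-th, where it is $\leq x_k$. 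The monotone condition therefore becomes the superadditivity statement $\Gamma(\tilde{x}^{(1)})+\Gamma(\tilde{x}^{(2)})\leq\Gamma(\vec{x})$, which I would extract from two ingredients: (i) concavity of $\Gamma$ along the directions generated by the measurement, which for a degree-one homogeneous function yields $\Gamma(u)+\Gamma(v)\leq\Gamma(u+v)$; and (ii) monotonicity of $\Gamma$ in the measured coordinate $x_k$, needed to absorb $c_1+c_2\leq 1$ and pass from $\tilde{x}^{(1)}+\tilde{x}^{(2)}$ back up to $\vec{x}$.

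To keep the verification finite I would exploit the symmetry of $\mathcal{G}_{IV}$ under interchanging the two degree-three nodes $A,B$ and the two degree-two nodes $C,D$, reducing the check to a measurement performed by the party currently labeled $n_1$, $n_1'$, $e_2$, or $e_3$. After normalizing $x_{n_1}=1$, each resulting inequality is an explicit rational inequality in the three remaining ordered components and the single free scaling parameter, which I expect to certify by writing the relevant difference as a sum of manifestly nonnegative terms.

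The hard part will be the piecewise definition of $\Gamma$ together with the fact that the labels $n_1,n_1',e_2,e_3$, and hence which branch of $\Gamma$ applies, are fixed by the component ordering and can change under the measurement. A measurement that shrinks the maximal component can demote $n_1$, or can carry the state across the boundary between the ``$n_1$ has $3$ edges'' and ``$n_1$ has $2$ edges'' regimes, so that $\vec{x}$, $\tilde{x}^{(1)}$, and $\tilde{x}^{(2)}$ may be governed by different closed forms. The proof must therefore first confirm that the two branches agree on the common boundary, where the maximum is tied between a degree-three and a degree-two party and both branches collapse to $2x_{n_1}+x_{e_3}+\tfrac{1}{3}x_{e_2}x_{e_3}/x_{n_1}$, so that $\Gamma$ is continuous, and then verify that no upward jump occurs when the outcomes land in different regions. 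Partitioning the measurement-parameter range according to the region of each outcome and checking superadditivity on each piece, gluing along these continuous creases, is where essentially all of the casework resides.
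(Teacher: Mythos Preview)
Your proposal is sound in principle but takes a different route from the paper. The paper's proof (stated as ``nearly identical in structure'' to the proof of Theorem~\ref{Thm:tau} in Appendix~\ref{Apx:tau}) invokes the universality of weak measurements to reduce to infinitesimal measurements with $(a_\lambda,c_\lambda)$ in a neighborhood of $(1/2,1/2)$, expands the average change $\Gamma-\overline{\Gamma_\lambda}$ to leading nonvanishing order in $(a-c)$, and checks nonnegativity case by case according to which components are tied. Because a weak measurement leaves the component ordering unchanged (or at worst touches a boundary), the labels $n_1,n_1',e_2,e_3$ and the active branch of $\Gamma$ are the same before and after; the region-crossing difficulty you correctly identify as ``the hard part'' therefore never arises in the paper's argument.

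Your approach instead treats arbitrary-strength measurements directly, using the degree-one homogeneity to rewrite monotonicity as the superadditivity $\Gamma(\tilde x^{(1)})+\Gamma(\tilde x^{(2)})\leq\Gamma(\vec x)$. This is a clean reformulation, and your observations (homogeneity, continuity across the branch boundary at $2x_{n_1}+x_{e_3}+\tfrac{1}{3}x_{e_2}x_{e_3}/x_{n_1}$, coordinate-wise monotonicity to absorb $c_1+c_2\leq 1$) are all correct. The price, however, is exactly the cross-region casework you anticipate: since $\Gamma$ is only piecewise smooth and has genuine kinks at the orderings where the maximal party changes, piecewise concavity plus continuity does not automatically give global superadditivity, and you would have to verify the inequality separately whenever $\tilde x^{(1)}$, $\tilde x^{(2)}$, and $\vec x$ fall in different regions. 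The paper's weak-measurement reduction trades all of that large-measurement casework for a short Taylor expansion and a handful of equality subcases, which is substantially less work.
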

\begin{proof}
The proof is nearly identical in structure to the one given for $\tau$ in Appendix \ref{Apx:tau}.
\end{proof}

As a result, it immediately follows that 
\begin{equation}
\label{Eq:Gamma}
P(\vec{x},\mathcal{G}_{IV})\leq \Gamma(\vec{x}).
\end{equation}
And just as in the case of Configurations III, the LPO protocol can approach this upper bound arbitrarily close whenever $x_0=0$.  Highlighting the standard W state, we have that for any $\epsilon>0$,
\[5/6-\epsilon<P(W_4,\mathcal{G}_{IV})\leq 5/6.\]
It should be noted that 5/6 is the also the optimal transformation probability if one considers transformations within the more general class of separable operations \cite{Chitambar-2011b}.

\medskip

\noindent\textbf{Configuration V (Fig. \ref{RDAll}):}
\begin{figure}[htbp]
\includegraphics[scale=0.6]{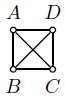}% Here is how to import EPS art
\caption{\label{RDAll}
Let $\mathcal{G}_{V}$ be the above configuration.   An upper bound on the success probability is given by Eq. \eqref{Eq:kappa} which is effectively tight when $x_0=0$.  For $\ket{W_4}$, this probability is $1$.} 
\end{figure}

For the state $\ket{W_4}$, the Fortescue-Lo Protocol achieves this distillation configuration with probability arbitrarily close to one.  For more general states, we recall the results from Ref. \cite{Chitambar-2011b}:
\begin{equation}
\label{Eq:kappa}
P(\vec{x},\mathcal{G}_{V})\leq 1-x_0-\frac{(x_A-x_B)(x_A-x_C)(x_A-x_D)}{x_A^2}
\end{equation}
where we have assumed without loss of generality that $x_A$ is the largest component value.  The LPO protocol can achieve this probability as close as desired whenever $x_0=0$.

\medskip

\noindent\textbf{Configuration VI (Fig. \ref{RDABBDADAC}):}
\begin{figure}[htbp]
\includegraphics[scale=0.6]{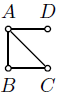}% Here is how to import EPS art
\caption{\label{RDABBDADAC}
Let $\mathcal{G}_{VI}$ be the above configuration.  For $\ket{W_4}$, the LPO Protocol gives a success probability of $\frac{1}{6}(3+\sqrt{3})$.  We conjecture this to be optimal.} 
\end{figure}

For this configuration, we only work out the Phase III calculation for the standard W state $\ket{W_4}$.  In this case, David is the ``least'' party and he measures first.  Outcome ``2'' is the state $\ket{W^{(ABC)}_3}$ obtained with probability $3/4(1-\alpha)$; from here, we have $P_{III}(W_3^{(ABC)},\mathcal{G}_{VI}\setminus D)=1$.  Outcome ``1'' is the state $\frac{1}{1+3\alpha}(\alpha,\alpha,\alpha,1)$.  The ensuing e/v subroutine is described in Figure \ref{evfigtree}.  We have 
\begin{align}
\lambda_{\vec{y}_\alpha,\mathcal{G}_{VI}}(W_2^{(BC)})&=\frac{1}{1+3\alpha}\cdot2\alpha(1-\alpha),\notag\\ \lambda_{\vec{y}_\alpha,\mathcal{G}_{VI}}(W_2^{(AD)})&=\frac{1}{1+3\alpha}\cdot2\alpha(1-\alpha)^2,\notag\\ \lambda_{\vec{y}_\alpha,\mathcal{G}_{VI}}(W_3^{(ABD)})&=\frac{1}{1+3\alpha}\cdot3\alpha^2(1-\alpha),\notag\\ \lambda_{\vec{y}_\alpha,\mathcal{G}_{VI}}(W_3^{(ACD)})&=\frac{1}{1+3\alpha}\cdot3\alpha^2(1-\alpha).\notag
\end{align}
This gives
\begin{align}
P_{III}(W_4,\mathcal{G}_{VI})=\sup_{0\leq \alpha< 1}\frac{f(\alpha)}{1-\alpha^3}&=\frac{\tfrac{3}{4}+\alpha+\tfrac{\alpha^2}{2}}{1+\alpha+\alpha^2}\notag\\
&=\frac{1}{6}(3+\sqrt{3})
\end{align}
which obtains this maximum when $\alpha=\frac{1}{2}(\sqrt{3}-1)$.  The generalized Fortescue-Lo Protocol gives a rate of 3/4 so we see an improvement in our protocol.  For an upper bound, it is known that this transformation cannot be accomplished with any probability greater than 5/6 by the more general class of separable operations \cite{Chitambar-2011b}.  Thus, we summarize our result by
\begin{align}
P_{FL}(W_4,\mathcal{G}_{VI})<P_{LPO}(W_4,\mathcal{G}_{VI})&\approx\tfrac{3+\sqrt{3}}{6}\notag\\
&\leq P(W_4,\mathcal{G}_{VI})\leq \tfrac{5}{6}.
\end{align}
We use the ``$\approx$'' symbol for the $P_{LPO}$ value since it can be approached arbitrarily close.

While we conjecture that this protocol is optimal for the state $\ket{W_4}$ and the configuration graph $\mathcal{G}_{VI}$, unfortunately it does not appear optimal for more general four qubits states.  Indeed, suppose that we begin with state $\vec{x}=(x_A,x_B,x_C,x_D)$ with $x_0=0$ and $x_A\geq x_B\geq x_C\geq x_D$.  The LPO Protocol says that we should first perform the e/v subroutine with respect to party 1, and then implement Phase III on the state $\ket{W_4}$.  The total probability is then given by Eq. \eqref{Eq:LPOprob}.  Explicitly computing it yields:
\begin{align}
P_{LPO}(\vec{x},\mathcal{G}_{VI})&=2(x_B+x_C+x_D)-\frac{x_Bx_D}{x_A}-2\frac{x_Bx_C}{x_A}\notag\\
&-2\frac{x_Cx_D}{x_A}+\frac{3+2\sqrt{3}}{3}\frac{x_Bx_Cx_D}{x_A^2}.
\end{align}
Now suppose we have $x_A=1-3t$ and $x_B=x_C=x_D=t$ with $t<1/4$.  Since Alice's component is strictly greater than all other components, she can make a weak measurement such that her component value is still the largest in both post-measurement states.  Specifically, when she performs the measurement given by Eq. \eqref{Eq:constraints} with $(a_1,c_1,a_2,c_2)$ in some neighborhood of $(1/2,1/2,1/2,1/2)$, the average change in $P_{LPO}$ is
\[\overline{\Delta P_{LPO}}=-(a-c)^2\cdot\frac{t^2}{1-3t}\left(20-t\frac{12-8\sqrt{3}}{1-3t}\right)\]
which can be positive for $t$ close to $1/4$.  Therefore, $P_{LPO}$ cannot be the optimal probability for the initial state $\vec{x}$ since a weak measurement by Alice increases the overall transformation probability.  

It should be emphasized that for the transformation of $\ket{W_4}$ according to the LPO protocol, we never encounter a state like $\vec{x}$.  The only time Alice's component is larger than David's is after David performs an e/v measurement and his component value is zero.  Consequently, we still believe the protocol to optimal for $\ket{W_4}$.

\section{Application to the transformation \texorpdfstring{$\ket{\varphi}_{1...N}\to\ket{W_N}$}{varphi}}
\label{Sect:standardW}

For a generic W-class state $\ket{\varphi}_{1...N}$, there has been promising progress on the SLOCC transformation of $\ket{\varphi}_{1...N}\to\ket{W_N}$ since the discovery 
of the unique form possessed by multipartite W-class states \cite{Kintas-2010a}. However,
the upper bound on the transformation success probability determined by the K-T monotones is not tight when the $x_0$ component of the initial state is not zero.  A canonical example of this is the transformation
of W-class state $\vec{x}=(tx_1, tx_2, \cdots, tx_n)$ into $\vec{y}=(x_1, x_2, ..., x_n)$ for $0<t<1$, which cannot be accomplished with probability $t$, and thus does not saturate the K-T monotones \cite{Kintas-2010a}.  

In the following, we improve on the general upper bound of $Nt$ set by the K-T monotones for the transformation $\ket{\vec{t}}\to\ket{W_N}$, where $\vec{t}=(t,t,\cdots, t)$.  We do this by first considering the random distillation of $\vec{t}$ into EPR pairs between party 1 and any other party.
\begin{lemma}
\label{lemma: x0nonzero}
 The optimal LOCC success probability for randomly distilling the $N$-partite W-class state $\vec{t}=(t,t, \cdots, t), 0\leq t\leq \frac{1}{N}$ into EPR pairs between party 1 and any other party is upper bounded by 
\begin{equation}
 p\leq 1-\sqrt{1-4(N-1)t^2}.
\end{equation}
\end{lemma}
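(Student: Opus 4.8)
The plan is to reduce this genuinely multipartite question to a bipartite one across the cut separating party $1$ from the remaining parties $\{2,\dots,N\}$. The point is that any $N$-party LOCC protocol is in particular an LOCC protocol across this single cut (grouping parties $2,\dots,N$ into one agent can only enlarge the set of achievable operations), so every bipartite entanglement monotone for the cut $1\,|\,\{2,\dots,N\}$ is automatically a monotone under the full $N$-party LOCC. The key observation is that for the star configuration every successful outcome is an EPR pair $\ket{\Phi^{(1j)}}$, and in each such outcome party $1$ is maximally entangled with the rest; hence I want a monotone across this cut that is maximized exactly on such states.

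First I would write down the reduced density matrix $\rho_1$ of party $1$ for a generic W-class state. Since party $1$ is a single qubit, $\rho_1$ is $2\times2$; a direct computation from $\sqrt{x_0}\ket{0\ldots0}+\sqrt{x_1}\ket{10\ldots0}+\cdots$ gives diagonal entries $x_1$ and $1-x_1$ and off-diagonal entry $\sqrt{x_0x_1}$, so that $\det\rho_1=x_1\sum_{j\geq2}x_j$ while $\mathrm{tr}\,\rho_1=1$. For the input $\vec t=(t,\dots,t)$ this yields $\det\rho_1=(N-1)t^2$, so the eigenvalues of $\rho_1$ are $\lambda_\pm=\tfrac12\bigl(1\pm\sqrt{1-4(N-1)t^2}\bigr)$, which are real throughout $0\le t\le 1/N$ because $4(N-1)t^2\le 4(N-1)/N^2=1-(N-2)^2/N^2\le 1$.

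Next I would invoke the fact that the sum of the smallest Schmidt coefficients across a bipartite cut is an entanglement monotone (Vidal's monotones); for a qubit-versus-rest cut this is simply the smaller eigenvalue $\lambda_-$ of $\rho_1$, which does not increase on average under LOCC. For any EPR pair $\ket{\Phi^{(1j)}}$ the qubit $\rho_1$ is maximally mixed, so $\lambda_-=1/2$. Writing the final LOCC ensemble as $\{p_\lambda,\vec x_\lambda\}$ and keeping only the successful (EPR) branches, monotonicity together with $\lambda_-\geq0$ gives $\lambda_-(\vec t)\ge\sum_\lambda p_\lambda\,\lambda_-(\vec x_\lambda)\ge\tfrac12\sum_{\mathrm{success}}p_\lambda=\tfrac{p}{2}$. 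Substituting $\lambda_-(\vec t)=\tfrac12\bigl(1-\sqrt{1-4(N-1)t^2}\bigr)$ and rearranging yields the claimed $p\le 1-\sqrt{1-4(N-1)t^2}$.

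The calculations are all routine; the only real decision is the choice of monotone, and the one point to verify with care is that a bipartite monotone for the $1\,|\,\{2,\dots,N\}$ cut legitimately bounds the $N$-partite protocol. This is exactly the coarse-graining remark above: any operation available to the $N$ separate parties is available to the two-party splitting, so the bound proved for bipartite LOCC a fortiori holds for the $N$-party LOCC we actually have. I expect no further obstacle, and in particular no new entanglement monotone needs to be constructed here, in contrast to the $\tau$ and $\Gamma$ arguments used for the four-qubit configurations.
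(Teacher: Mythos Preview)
Your argument is correct and is essentially the same as the paper's: both reduce to the bipartite cut $1\,|\,\{2,\dots,N\}$ and bound the EPR-distillation probability by twice the smallest squared Schmidt coefficient (equivalently $2\lambda_{-}(\rho_1)$). The only cosmetic difference is that the paper writes out the merged two-party state $\sqrt{1-Nt}\ket{00}+\sqrt{t}\ket{10}+\sqrt{(N-1)t}\ket{01}$ and reads off its Schmidt data, whereas you compute $\det\rho_1$ directly; the resulting bound and the underlying monotone are identical.
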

\begin{proof} The proof is straightforward. We can ``merge'' together all parties other then party 1 so that we have a state unitarily equivalent to $\ket{\psi}=\sqrt{1-Nt}\ket{00}+\sqrt{t}\ket{10}+\sqrt{(N-1)t}\ket{01}$, whose smallest Schmidt component is 
$\sqrt{\tfrac{1-\sqrt{1-4(N-1)t^2}}{2}}$.  Therefore, an upper bound on the probability for distilling EPRs across the bipartition $1:23\cdots N$ is $1-\sqrt{1-4(N-1)t^2}$. 
\end{proof}

The following theorem then shows the desired result.

\begin{theorem}
 The optimal LOCC transformation probability from $N$-partite W class state $\vec{t}=(t,t, \cdots, t), 0\leq t\leq \frac{1}{N}$ into the standard W state 
$\ket{W_N}=(\frac{1}{N}, \frac{1}{N}, \cdots, \frac{1}{N})$ is upper bounded by 
\begin{equation}
 P(\vec{t}\rightarrow W_N)\leq \tfrac{N}{2}(1-\sqrt{1-4(N-1)t^2})< Nt.
\end{equation}
\end{theorem}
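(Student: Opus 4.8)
The plan is to bound $P := P(\vec t \to W_N)$ by \emph{chaining} the target transformation with a known distillation of $\ket{W_N}$ and then invoking Lemma~\ref{lemma: x0nonzero}. The guiding principle is that LOCC protocols compose multiplicatively: if $\vec t$ can be converted into $\ket{W_N}$ with probability $P$, and $\ket{W_N}$ can subsequently be distilled into a fixed EPR pair with probability $q$, then running the two protocols in sequence distills $\vec t$ into that same EPR pair with probability at least $Pq$. Thus an upper bound on the \emph{composite} task will pull back to an upper bound on $P$.

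For the second link I would use the specified-pair distillation of $\ket{W_N}$. As recalled in the Introduction, the optimal probability of distilling $\ket{W_N}$ into an EPR pair between two designated parties — say party $1$ and party $2$ — is $2/N$, and this value is approachable by LOCC. Since $\ket{\Phi^{(12)}}$ is one admissible outcome of the random distillation in which party $1$ becomes entangled with \emph{any} other party, the composite protocol shows that $\vec t$ can be randomly distilled into an EPR pair between party $1$ and any other party with probability arbitrarily close to $\tfrac{2}{N}P$. Incidentally, the K-T monotone on party $1$'s component forces $q \le 2/N$ for \emph{any} such distillation (each success outcome leaves party $1$ with component $1/2$), so this choice of intermediate rate is the best the method can offer and is precisely what produces the factor $N/2$.

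Next I would apply Lemma~\ref{lemma: x0nonzero}, which upper bounds exactly this random-distillation probability by $1-\sqrt{1-4(N-1)t^2}$. Combining the two inequalities gives $\tfrac{2}{N}P \le 1-\sqrt{1-4(N-1)t^2}$, and solving for $P$ yields the asserted bound $P \le \tfrac{N}{2}\bigl(1-\sqrt{1-4(N-1)t^2}\bigr)$.

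The remaining task is the elementary verification that this new bound strictly improves the K-T bound, i.e. $\tfrac{N}{2}(1-\sqrt{1-4(N-1)t^2}) < Nt$ for $0 < t < 1/N$. This reduces to $1-2t < \sqrt{1-4(N-1)t^2}$; since $t \le 1/N \le 1/2$ the left-hand side is nonnegative, so squaring is an equivalence, and after cancellation the inequality collapses to $4t(Nt-1) < 0$, which holds exactly on $0<t<1/N$ (with equality at the two endpoints, where the state is trivial or already $\ket{W_N}$). I expect the only genuinely non-routine step to be the first one: recognizing that the \emph{harder} downstream task of producing an EPR pair — whose difficulty is quantified by Lemma~\ref{lemma: x0nonzero} through the merged bipartite cut — can be inserted \emph{after} the $\ket{W_N}$-production step, so that a bound on the endpoint of the chain constrains the intermediate target.
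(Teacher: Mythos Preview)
Your proposal is correct and matches the paper's proof almost exactly: both chain the conversion $\vec{t}\to\ket{W_N}$ with the distillation of $\ket{W_N}$ into an EPR pair involving party~1 (rate $2/N$), invoke Lemma~\ref{lemma: x0nonzero} to bound the composite, and then verify the elementary strict inequality. The only cosmetic difference is that you phrase the second step as a specified-pair distillation while the paper cites the random-pair rate for the star configuration centered at party~1, but these coincide at $2/N$ by the very K-T argument you mention.
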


\begin{proof} We know that the optimal random-pair transformation of $\ket{W_N}$ into an EPR state shared between party 1 and some other party has probability $\frac{2}{N}$. If the transformation 
probability from $\vec{t}$ into $\ket{W_N}$ is higher than $\tfrac{N}{2}(1-\sqrt{1-4(N-1)t^2})$, then we can firstly transform $\vec{t}$ into $\ket{W_N}$,
and then distill EPR pairs between party 1 and the other parties with an overall successful probability larger than 
$1-\sqrt{1-4(N-1)t^2}$, contradicting with lemma 2.  Then to finish proving the theorem, we must show that $\tfrac{N}{2}(1-\sqrt{1-4(N-1)t^2})< Nt$ when $0<t<\frac{1}{N}$.  It is an elementary optimization exercise to see that $ 1- 2t - \sqrt{1-4(N-1)t^2}<0$ whenever $0<t<\frac{1}{N}$.
\end{proof}
This ``grouping'' argument given for state $\vec{t}$ can be generalized to any $\ket{\varphi}_{1...N}$ having $x_0\not=0$ in order to obtain an upper bound on the transformation probability of $\ket{\varphi}_{1...N}\to\ket{W_N}$.  While our upper bound is an improvement over the K-T monotones, it is not tight in general.  Proving optimal transformation probability when $x_0\not=0$ remains an open problem.

\section{Conclusion}

To conclude this article, let us first summarize the overall idea of the ``Least Party Out'' Protocol.  Given a generic W-class state, we first remove the $x_0$ component with some probability.  We then proceed to symmetrize by converting to standard W states $\ket{W_{|S|}^{(S)}}$.  This is what the ``Equal or Vanish'' subroutine accomplishes, and it does so in such a way that the symmetry exists only between parties connected by $\mathcal{G}$ or any subgraph of $\mathcal{G}$.  Finally, given a standard W state, the desired EPR pairings are obtained by removing parties from the entanglement in order of their connectivity in $\mathcal{G}$, the ``least'' parties being removed first.

For three qubit random distillations, our protocol is optimal, and for four qubits, it is proven optimal when $x_0=0$ for all but Configuration VI, although it still may be optimal for the standard W state.  In proving optimality for Configurations III and IV, the strategy was to compute the general expression for the LPO probability when $x_0=0$, and then show that this expression is an entanglement monotone.  We have applied the same approach to study random distillations in systems with a greater number of parties.  Unfortunately, the general expression for the LPO probability becomes quite complicated.  This can be explicitly seen from Eq. \eqref{Eq:LPOprob} in which the number of terms in the sum scales as $O(2^N)$ for a general configuration graph $\mathcal{G}$.

\subsection*{Open Questions and Concluding Remarks}

\medskip

\noindent\textbf{I.}

An obvious unresolved problem is to complete the four qubit picture by solving the random distillation of Configuration VI.  We know the LPO Protocol is not optimal for non-standard W-class states, but it is not clear why this is the case.  One possibility is the existence of $k$-cliques (a set of $k$ nodes all connected to one another) and the fact that all but one party belongs to a 3-clique.  While Configurations IV and V also have 3-cliques, each party belongs to at least one.  This may be the reason why the protocol behaves optimally in these two cases.  Understanding precisely the limitations of our protocol for Configuration VI may also prove helpful when considering the same configuration of random distillations for more general states beyond the W Class.

\medskip

\noindent\textbf{II.}

Another open problem is to generalize some of our results to a larger number of parties, especially the random distillations whose configuration graphs have relatively few edges.  For example, consider the first graph in Configurations III for which we know the LPO Protocol reduces to the Fortescue-Lo Protocol, and it is optimal.  For a six qubit system, this configuration generalizes to three disjoint pairings of the parties: $(1,2)$, $(3,4)$, and $(5,6)$.  If, for this configuration, we perform the LPO Protocol on the state $(x_1,x_2,x_3,x_4,x_5,x_6)$ with $x_1\geq x_2\geq...\geq x_6$, the resultant probability function is 
\begin{align}
 \tau_6&= 2(x_2+x_4+x_6-\frac{x_2x_4}{x_1}-\frac{x_2x_6}{x_1}-\frac{x_4x_6}{x_3})\notag\\ &+2\frac{x_2x_4x_6}{x_1x_3}+2\frac{x_4x_5x_6}{x_3^2}+\frac{2x_2x_3x_4}{3x_1^2}+\frac{2x_2x_5x_6}{3x_1^2}\notag\\
&-2\frac{x_2x_4x_5x_6}{x_1x_3^2}-\frac{14x_2x_3x_4x_5x_6}{15x_1^4}.
\end{align}
We strongly suspect that this probability is optimal, but we have no proof at this point.  As in the four qubit case, the LPO Protocol reduces to the Generalized Fortescue-Lo Protocol.  Note that for the state $\ket{W_6}$, the success probability is $2/5$.

\begin{figure}[t]
\includegraphics[scale=0.6]{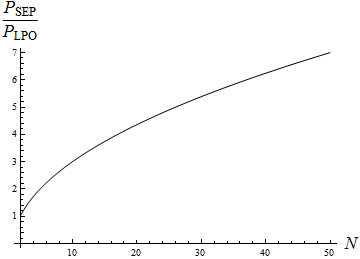}% Here is how to import EPS art
\caption{\label{Fig:SEPvsLOCC3}
The relative difference between the optimal separable operation and the LPO Protocol.  The configuration graph consists of $N$ disjoint pairs.  Separable operations performs as $P_{SEP}=\sqrt{\frac{1}{N}}$ where as the LPO Protocol obtains the rate of $P_{LPO}=\frac{2}{2N-1}$.  We conjecture that the LPO protocol is LOCC optimal for this configuration graph, as its known to be when $N=4$.} 
\end{figure}

The generalization of this configuration to $2N$ qubits consists of a graph $\mathcal{G}_{2N}$ with $N$ disjoint pairings.  Intuitively, the LPO protocol will again reduce to the Generalized Fortescue-Lo Protocol since there exists no particular ``least'' party.  That is, the procedure will be for each party to perform weak measurements to randomly obtain three qubit W states $\ket{W_3}$ from which an EPR state can be obtained by a specified pair with probability $2/3$.  One particular trio will obtain a W-state with probability $1/\tbinom{2N}{3}$, and there are a total of $2N-2$ trios in which this particular duo can belong.  And finally, there are $N$ possible pairs.  Thus, the total probability of some specified pair $(i,j)$ obtaining an EPR state is:
\[P_{LPO}(W_{2N},\mathcal{G}_{2N})=\frac{1}{\tbinom{2N}{3}}\cdot(2N-2)\cdot\frac{2}{3}\cdot N=\frac{2}{2N-1}.\]
What is particularly interesting is when this transformation is compared to the optimal distillation probability by separable operations.  As shown in Ref. \cite{Chitambar-2011b}, this probability is given by
\[P_{SEP}(W_{2N},\mathcal{G}_{2N})=\sqrt{\frac{1}{N}}.\]
Thus, if the LPO procedure is optimal for this particular transformation, which we strongly believe it is, then we see that the performance gap between separable operations and LOCC grows arbitrarily large.  We depict this relative difference in Fig. \ref{Fig:SEPvsLOCC3}.  For example, in the four qubit case where the LPO procedure is optimal, we have $P_{LOCC}=\frac{2}{3}<P_{SEP}=\sqrt{\frac{1}{2}}$.

\medskip

\noindent\textbf{III.}

Beyond the W-class of states, very little is known about single copy random distillations.  Partial extensions of the Fortescue-Lo Protocol to symmetric Dicke states have been made, it has been shown that even within the three qubit GHZ class, distilling to randomly chosen pairs outperforms distilling to a specified pair \cite{Fortescue-2008a}.  Nevertheless, how the topology of the outcome configuration graph $\mathcal{G}$ affects these transformations has yet to be studied in general.  We hope the results of this article shed light on this question and provide a new insight into the structure of multipartite entanglement.

\begin{acknowledgments}
We thank Jonathan Oppenheim, Ben Fortescue and Sandu Popescu for providing helpful discussions during the development of this work.  We also thank the financial support from funding agencies including NSERC, QuantumWorks, the CRC program and CIFAR. 
\end{acknowledgments}

\newpage

\appendix

\section{Proof of Theorem \ref{Thm:tau}}
\label{Apx:tau}
\begin{widetext}
We consider case-by-case measurements in which each party acts according to \eqref{Eq:constraints}.  The function $\tau$ transforms as $\tau\to\tau_\lambda$ for $\lambda\in\{1,2\}$, and we are interested in the average change: $\overline{\tau_\lambda}=p_1\tau_1+p_2\tau_2$.  By the universality of weak measurements \cite{Bennett-1999a, Oreshkov-2005a}, it is sufficient to prove $\tau$ monotonic in the weak measurement setting, i.e. with $(a_1,c_1,a_2,c_2)$ in some neighborhood of $(1/2,1/2,1/2,1/2)$.  We consider three cases. 

\noindent\textbf{Case I, $x_{n_1}>x_{n_1'}$:}
First consider when party $n_1$ performs a measurement.  We can assume the measurement is weak enough such that $n_1$, $n_1'$, $p$ and $p'$ are the same for both pre-measurement and post-measurement states.  Consider the measurement outcome $\lambda\in\{0,1\}$ with $a_\lambda > c_\lambda$.  Then
\[p_\lambda\tau_\lambda=2a_\lambda x_{p}+2a_\lambda x_{p'}-2\frac{a_\lambda^2}{c_\lambda}\left(1-\frac{1}{3}\frac{a_\lambda}{c_\lambda}\frac{x_{n_1'}}{x_{n_1}}\right)\frac{x_{p}x_{p'}}{x_{n_1}}.\]
We have $\frac{\partial\overline{\tau_\lambda}}{\partial c_\lambda}|_{a_\lambda=1/2,c_\lambda=1/2}\geq 0$ which implies that $\tau-\overline{\tau_\lambda}$ will be minimized by the choice $c_1+c_2=1$.  Then writing $c\equiv c_1<a\equiv a_1$, we have that
\begin{align}\tau-\overline{\tau_\lambda}=2\left(-1+\frac{a^2}{c}+\frac{(1-a)^2}{(1-c)}\right)\frac{x_{p}x_{p'}}{x_{n_1}}+\frac{2}{3}\left(1-\frac{a^3}{c^2}-\frac{(1-a)^3}{(1-c)^2}\right)\frac{x_{p}x_{p'}x_{n_1'}}{x_{n_1}^2}.
\end{align}
Expanding this expression about the point $(1/2,1/2)$ to second order gives
\begin{align}\tau-\overline{\tau_\lambda}\approx&\;\; 8\left((a-\tfrac{1}{2})^2+(c-\tfrac{1}{2})^2-2(a-\tfrac{1}{2})(c-\tfrac{1}{2})\right)\left(1-\frac{x_{n_1'}}{x_{n_1}}\right)\frac{x_{p}x_{p'}}{x_{n_1}}\notag\\
=&\;\;8\left(a-c\right)^2\left(1-\frac{x_{n_1'}}{x_{n_1}}\right)\frac{x_{p}x_{p'}}{x_{n_1}}\geq 0.
\end{align}
Now consider when the other parties measure.  Since the coefficient of $x_p$ is non-negative, the monotonicity of $\tau$ when party $p$ measures follows from the K-T monotones.  For $n_1'$ and $p'$, there are two possibilities.  \textbf{Subcase, $x_{n_1'}>x_{p'}$:}  Here we can assume the measurements are weak enough such that their ordering does not change.  Then since the coefficients of $x_{p'}$ and $x_{n_1'}$ are non-negative, the K-T monotones imply the monotonicity of $\tau$.  \textbf{Subcase, $x_{n_1'}=x_{p'}$:} We have $\tau=2x_p+2x_{p'}-2\frac{x_px_{p'}}{x_{n_1}}+\frac{2}{3}\frac{x_px_{p'}^2}{x_{n_1}^2}$.  It is easy to see that again $\tau-\overline{\tau_\lambda}$ is minimized when $c_1+c_2=1$.  So parameterizing the measurement by $a$ and $c$ with $a>c$, we have that the average change in $x_{n_1'}$ is $(a+1-c)x_{p'}$ while the average change in $x_{p'}$ is $(1-a+c)x_{p'}$.  It follows that 
\[\tau-\overline{\tau_\lambda}=2x_{p'}(a-c)-2(a-c)\frac{x_{p}x_{p'}}{x_{n_1}}\geq 0.\]

\noindent\textbf{Case II, $x_{n_1}=x_p$:}
Here, $\tau=2x_p+\frac{2}{3}\frac{x_{p'}x_{n_1'}}{x_{n_1}}$.  When either party $n_1$ or $p$ measures with $a_\lambda>c_\lambda$, the new components are $x_{\lambda,n_1}=\frac{a_\lambda}{p_\lambda} x_p$, $x_{\lambda,p}=\frac{c_\lambda}{p_\lambda}x_p$, $x_{\lambda,n_1'}=\frac{a_\lambda}{p_\lambda}x_{p'}$ and $x_{\lambda,p'}=\frac{a_\lambda}{p_\lambda}x_{n_1'}$.  Thus, 
\[p_\lambda\tau_\lambda=2(a_\lambda-c_\lambda)x_{n_1'}+2c_\lambda x_p+\frac{2c_\lambda}{3}\frac{x_{p'}x_{n_1'}}{x_p}.\]
Since $x_p\geq x_{n_1'}$, we have $\frac{\partial\overline{\tau_\lambda}}{\partial c_\lambda}|_{a_\lambda=1/2,c_\lambda=1/2}\geq 0$.  Again, this means that $\tau-\overline{\tau_\lambda}$ will be minimized by the choice $c_1+c_2=1$. Taking $a>c$, we have that
\begin{align}\tau-\overline{\tau_\lambda}&=2(1-c-(1-a))x_{p}-2\left(1-c-\frac{(1-a)^2}{1-c}\right)x_{n_1'}+\frac{2}{3}\left(1-c-\frac{(1-a)^3}{(1-c)^2}\right)\frac{x_{p}x_{p'}}{x_{n_1'}}
\end{align}
which to first order about the point $(1/2,1/2)$ takes the form \begin{equation}
2(a-c)(x_{p}-2x_{n_1'}+\frac{x_{p}x_{p'}}{x_{n_1'}}\geq 2(a-c)\frac{(x_{n_p}-x_{n_1'})^2}{x_{n_p}}\geq 0.
\end{equation}
If either $x_{p'}$ or $x_{n_1'}$ measures, then the monotonicity of $\tau$ follows from the K-T monotones.

\noindent\textbf{Case III, $x_{n_1}=x_{n_1'}$:}  We have $\tau=2x_{p}+2x_{p'}-\frac{4}{3}\frac{x_{p}x_{p'}}{x_{n_1}}$.  When either party $n_1$ or $n_1'$ measures, parties $p$ and $p'$ remain the same.  With $a_\lambda>c_\lambda$, we have
\[p_\lambda\tau=2a_\lambda x_p+2a_\lambda x_{p'}-2a_\lambda\frac{x_px_{p'}}{x_{n_1}}+\frac{2}{3}c_\lambda\frac{x_px_{p'}}{x_{n_1}}\]
which has $\frac{\partial\overline{\tau_\lambda}}{\partial c_\lambda}|_{a_\lambda=1/2,c_\lambda=1/2}\geq 0$.  So again we assume $c_1=1-c_2\equiv c<a$ and we find that 
\begin{align}
\tau-\overline{\tau_\lambda}&=-2\left(\frac{2}{3}-a-\frac{(1-a)^2}{1-c}\right)\frac{x_{p}x_{p'}}{x_{n_1}}-\frac{2}{3}\left(c+\frac{(1-a)^3}{(1-c)^2}\right)\frac{x_{p}x_{p'}}{x_{n_1}}
\end{align}
which to third order about the point $(1/2,1/2)$ takes the form 
\begin{equation}
\frac{8}{3}\left((a-\tfrac{1}{2})^3-(c-\tfrac{1}{2})^3\right)-8\left((c-\tfrac{1}{2})(a-\tfrac{1}{2})^2-(a-\tfrac{1}{2})(c-\tfrac{1}{2})^2\right)\frac{x_{p}x_{p'}}{x_{n_1}}=\frac{8}{3}(a-c)^3\frac{x_{n_p}x_{n_p'}}{x_{n_1}}\geq 0.
\end{equation}
Finally, since the coefficients of $x_p$ and $x_{p'}$ are positive in $\tau$, by the K-T monotones, $\tau$ is monotonic when either of these parties measures.
\end{widetext}

\bibliography{EricQuantumBib}

\end{document}